\documentclass{article}

\usepackage[margin=1.5in]{geometry}
\usepackage[utf8]{inputenc} 
\usepackage[T1]{fontenc}    
\usepackage{hyperref}       
\usepackage{url}            
\usepackage{booktabs}       
\usepackage{amsfonts}       
\usepackage{nicefrac}       
\usepackage{microtype}      

\usepackage{graphicx}
\usepackage{amssymb}
\usepackage{lastpage}
\usepackage{transparent}
\usepackage[dvipsnames]{xcolor}  
\usepackage{dsfont}
\usepackage{amsmath}
\usepackage{amsthm}  
\usepackage{lmodern}  
\usepackage{csquotes}  
\usepackage{algorithm}
\usepackage{tabularx}  
\usepackage{ragged2e} 
\usepackage{algpseudocode}
\usepackage{appendix}
\usepackage{booktabs}  
\usepackage{multirow}  

\usepackage[english]{babel}
\usepackage{blindtext}

\newcolumntype{L}[1]{>{\hsize=#1\hsize\RaggedRight} X}

\newtheorem{theorem}{Theorem}[section]
\newtheorem{lemma}[theorem]{Lemma}

\newtheorem{corollary}[theorem]{Corollary}

\theoremstyle{definition}

\newtheorem{definition}{Definition}[section]

\title{Identification of inferential parameters in the covariate-normalized linear conditional logit model}

\author{%
  Philip Erickson\thanks{Author gratefully acknowledges helpful conversations with Florin Bidian, Ziyu Wang, Han Hong, Will Ambrosini, and seminar participants at Amazon.} \\
  Amazon.com\\
  \texttt{phericks@amazon.com} \\
}

\usepackage{biblatex} 
\bibliography{softmax_normalization}

\begin{document}

\maketitle

\begin{abstract}
  The conditional logit model is a standard workhorse approach to estimating customers' product feature preferences using choice data. Using these models at scale, however, can result in numerical imprecision and optimization failure due to a combination of large-valued covariates and the softmax probability function. Standard machine learning approaches alleviate these concerns by applying a normalization scheme to the matrix of covariates, scaling all values to sit within some interval (such as the unit simplex). While this type of normalization is innocuous when using models for prediction, it has the side effect of perturbing the estimated coefficients, which are necessary for researchers interested in inference. This paper shows that, for two common classes of normalizers, designated \emph{scaling} and \emph{centered scaling}, the data-generating non-scaled model parameters can be analytically recovered along with their asymptotic distributions. The paper also shows the numerical performance of the analytical results using an example of a scaling normalizer.
\end{abstract}

\section{Introduction}
The conditional logistic regression, or conditional logit, is a statistical method that models the choice that some type of decision maker makes over a discrete set of potential options. This model takes data on the features of each option as well as the features of the decision maker, combines them with which option the decision maker chose, and allows a researcher to back out the underlying preferences for these features.

Once these preferences are estimated, the model can be used to predict the probability of any of a set of outcomes. For example, if a company were to sell an expensive smart speaker with a high quality sound and an inexpensive smart speaker with a lower-quality sound, what is the probability that customers would buy one or the other? The model can also be used for \emph{inference}, or to study the marginal impact of features on outcome probabilities. For example, a researcher might not only be interested in how many smart speakers a company plans to sell, but also in how adding a screen to a smart speaker might change that number, or how much more impact better sound would have on a customer's willingness-to-pay for a device. These types of inference questions deal not only with final outcomes (predictions), but also the relative impact of different covariates on the outcome.

Because of functional form requirements necessary to estimate conditional logit models, they are prone to numerical issues arising from the scale of model covariates. Large-valued variables, for example, can lead to numerical infinite values that stifle optimization routine convergence. Different scaled variables can also slow down the convergence rate of these optimizers.

The standard machine learning approach to solving numerical problems involving variable scale is normalization, or scaling each variable down to fit within a certain interval (generally either [-1, 1] or [0, 1]). See, for example, \cite{Juszczak_featurescaling} for a use case in support vector machines and \cite{DBLP:journals/corr/IoffeS15} with neural nets. Such approaches solve the numerical infinity problem and boost the efficiency of numerical optimization/solution methods.

While these normalization approaches solve the numerical issues, they have the perverse side-effect of rescaling the model parameter values as well. For most machine learning applications, this is innocuous since the final goal of the model is prediction. However, for inferential problems for which the actual parameter values are used, this rescaling can potentially invalidate the model.

This paper shows that, for two standard classes of normalization schemes (designated scaling and centered scaling) and the standard parameterization of the conditional logit model (linear in parameters), the non-normalized feature parameters can be directly recovered along with their asymptotic distribution. The paper further shows the numerical performance of these results.

The paper proceeds as follows. Section~\ref{sec:model} outlines the linear conditional logit model. Section~\ref{sec:hmin} defines the two classes of normalizers and presents analytical results for recovering data-generating parameters and their asymptotic distribution. Section~\ref{sec:simulation} gives and example showing the numerical performance of the theoretical results using the newly-introduced $varmax$ operator. Section~\ref{sec:conclusion} concludes.

\section{The linear conditional logit model}
\label{sec:model}
The conditional logit model is based on the random utility (RU) framework, developed by McFadden~\cite{McFadden76}. Working from Luce's choice theory~\cite{Luce59}, McFadden provided a method for studying discrete-choice scenarios. The RU approach starts with the assumption that a product gives a person some utility, or well-being, and that when faced with the task of choosing one of several options, a person will choose the option that yields the highest utility.

Formally, the utility of a customer $i$ from product $j$ is given by
\begin{equation}
    u_{ij} = \overline{u}_{ij} + \varepsilon_{ij}
    \label{eq:utility_individual_mixed}
\end{equation}
with $\overline{u}_{ij}$ indicating some fixed level of utility of product $j$ for person $i$ and $\varepsilon_{ij}$ representing the fact that there is noise in customer preferences. Researchers generally assume this random term has a type-1 extreme value (or Gumbel) distribution, which yields an analytical solution to the probability that a customer will purchase any given product. Indexing the purchase event by $t$ and defining $y$ as the purchase decision, the probability of person $i$ choosing product $j$ from a choice set $K$ is given by
\begin{equation}
    \Pr(y_t=j) = \frac{e^{\overline{u}_{ij}}}
                      {1 + \sum_{k\in K}e^{\overline{u}_{ik}}}
    \label{eq:logit_prob}
\end{equation}
which is the standard softmax function. Equation~\eqref{eq:logit_prob} allows for the possibility that a person would choose none of the options in $K$. The probability of this occurring is the same, but with the numerator equal to 1.

While Equation~\eqref{eq:logit_prob} defines a probability as a function of utilities, note that utility itself is latent. The utility can, however, be parameterized as a function of covariates and preference (or utility) parameters, letting the researcher infer the value of the preference parameters by observed choice decisions.

The standard approach to estimating this model is to parameterize the fixed utility as the following linear function of observables
\begin{equation}
    \overline{u}_{ij} = x_{ij}'\beta
    \label{eq:linear_utility}
\end{equation}
with $x_{ij}$ representing a vector of product- and individual-level covariates (and possibly functions of those variables). Equation~\eqref{eq:linear_utility} combined with Equation~\eqref{eq:utility_individual_mixed} characterize the \emph{linear conditional logit model}. Equation~\eqref{eq:linear_utility} also implies that the values of each potential outcome in Equation~\eqref{eq:logit_prob} are now \emph{conditional} on product and customer observable covariates, but not the identity of the products themselves\footnote{This is the primary differentiator between conditional logit models and the multinomial logistic regression.}.

The preference vector $\beta$ in the conditional logit model can be estimated via maximum likelihood by finding the values that most closely justify the outcome data. The likelihood contribution of any purchase decision $t$ is given by
\begin{equation}
    L_t(\beta|y_t, x_t) = \prod_{k' \in K} \Pr(y_t=k')^{\mathds{1}(y_t=k')}
    \label{eq:likelihood_contribution}
\end{equation}
from which the log-likelihood contribution is derived as
\begin{equation}
    \ell_t(\beta|y_t, x_t) = \sum_{k' \in K} \mathds{1}(y_t=k')\log(\Pr(y_t=k'))
    \label{eq:log_likelihood_contribution}
\end{equation}
with the final log-likelihood function given by
\begin{equation}
    \ell(\beta|y, X) = \sum_{t=1}^T \ell_t(\cdot)
    \label{eq:log_likelihood}
\end{equation}
The preference vector $\beta$ is solved for by numerical optimization, finding the values that maximize Equation~\eqref{eq:log_likelihood}, or the likelihood of observing the outcome vector $y$ given the observed covariate matrix $X$.

\section{Normalization and identification}
\label{sec:hmin}
A common problem with approaches using the softmax probability function is variable scale. Since it relies on exponentiation, the softmax function can often produce numbers computationally equivalent to infinity, especially when working with large-valued covariates such as price or income. This can lead to either non-convergence in optimization routines or convergence to nonsensical values.

The standard way to deal with large-valued covariates is through normalization, or scaling the covariates so that they fall within a certain interval (generally [-1, 1] or [0, 1])\cite{han2011data}. These methods are convenient for removing numerical precision issues as in the softmax infinite-value case and often improve the convergence speed of various algorithms as well.

In general, normalization methods are geared towards improving model prediction. In many econometric applications, however, the analyst is interested primarily in the relative impact of different features on the probability of an outcome. That is, $\beta$ is often more important than $\Pr(y)$. And while normalization can clearly improve estimates of $\Pr(y)$, it's not clear that they allow the user to recover an accurate estimate of $\beta$.

One basic class of normalization methods involves Hadamard (or element-wise) multiplying each observation of a covariate matrix by the inverse of some constant vector. Refer to this class as \emph{scaling}. This class is defined formally in Definition~\ref{def:scaling}. The operators $\odot$ and $\oslash$ denote Hadamard multiplication and division, respectively.
\begin{definition}
  Let $x$ be a length-$m$ vector to be normalized and $x_m$ be a length-$m$ normalizing vector with no zero values. Scaling is a normalization method of the form $x \oslash x_m$.
  \label{def:scaling}
\end{definition}
It can be shown that, for any normalizing vector $x_m$, $\beta$ is identified and can be analytically recovered from a model estimated after normalizing the data through scaling. The proof of the result depends on the following Lemma.
\begin{lemma}
    For conforming vectors $x$, $y$, and $z$, $(x \odot y) \cdot z = x \cdot (y \odot z)$
    \label{lem:products}
\end{lemma}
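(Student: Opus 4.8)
The plan is to reduce both sides of the claimed identity to explicit coordinate sums and then observe that they agree term by term. Write the common length of the three vectors as $m$, so that $x = (x_1,\dots,x_m)$ and likewise for $y$ and $z$; the hypothesis that the vectors are conforming is exactly what guarantees this shared dimension and makes the Hadamard products $x \odot y$, $y \odot z$ and the inner products with $z$, respectively $x$, all well defined.

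First I would expand the left-hand side. By the definition of the Hadamard product, the $i$th coordinate of $x \odot y$ is $x_i y_i$; then by the definition of the inner product, $(x \odot y) \cdot z = \sum_{i=1}^m (x_i y_i) z_i$. Expanding the right-hand side in the same way gives $x \cdot (y \odot z) = \sum_{i=1}^m x_i (y_i z_i)$. It then remains only to compare the two sums: each summand is a product of three real scalars, so associativity (and, if one prefers, commutativity) of multiplication in $\mathbb{R}$ yields $(x_i y_i) z_i = x_i (y_i z_i)$ for every index $i$, whence the two sums coincide term by term and the identity follows.

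The only point requiring any care — and the closest thing to an obstacle — is the bookkeeping: invoking the conformability hypothesis so that all four operations in the statement act on vectors of the same length $m$ and the summation index ranges over the same set on both sides. Beyond that, the result is an immediate consequence of the associativity of scalar multiplication, and nothing about the conditional logit model or the normalization schemes is needed. I would therefore keep the write-up to a couple of lines of displayed algebra.
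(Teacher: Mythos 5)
Your proposal is correct and follows essentially the same route as the paper's proof: expand both sides into coordinate sums via the definitions of the Hadamard and inner products, then invoke associativity of scalar multiplication termwise. No differences worth noting.
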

\begin{proof}
    \begin{align*}
        (x \odot y) \cdot z &= [(x_1, \dots, x_n) \odot (y_1, \dots, y_n)] \cdot (z_1, \dots, z_n) \\
            &= (x_1y_1, \dots, x_ny_n) \cdot (z_1, \dots, z_n) \\
            &= x_1y_1z_1 + \dots + x_ny_nz_n \\
            &= (x_1, \dots, x_n) \cdot (y_1z_1, \dots, y_nz_n) \\
            &= (x_1, \dots, x_n) \cdot [(y_1, \dots, y_n) \cdot (z_1, \dots, z_n)] \\
            &= x \cdot (y \odot z)
    \end{align*}
\end{proof}
Using Lemma~\ref{lem:products}, it can now be shown that $\beta$ is identified and can be analytically recovered from an scaling-normalized linear conditional logit model.
\begin{theorem}
    Define $\hat{\beta}^* = \text{argmax } L(\beta^*|X^*)$ and $\hat{\beta} = \text{argmax } L(\beta|X)$ as the MLE estimators for the conditional logit model with an arbitrary row of $X^*$ defined as $x^* = x \oslash x_m$, with $x$ denoting the corresponding row of $X$. Assume that the columns of $X$ are linearly independent. If utility is linear in its parameters, then $\hat{\beta} = \hat{\beta}^*\oslash x_m$.
    \label{thm:identification}
\end{theorem}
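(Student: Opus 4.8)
The plan is to exploit the fact that the conditional logit likelihood depends on the parameter vector only through the linear index $\overline{u}_{ij}=x_{ij}'\beta$, so that the whole claim reduces to a statement about these indices. Identifying $x'\beta$ with the dot product $x\cdot\beta$, and observing that Hadamard division by $x_m$ is the same as Hadamard multiplication by the vector $w$ with entries $w_k=1/x_{m,k}$ (well defined precisely because $x_m$ has no zero entry), Lemma~\ref{lem:products} applied to the conforming length-$m$ vectors $x_{ij}$, $w$, $\beta^*$ gives, for every $i$ and $j$,
\[
    (x^*_{ij})'\beta^* = (x_{ij}\oslash x_m)\cdot\beta^* = (x_{ij}\odot w)\cdot\beta^* = x_{ij}\cdot(w\odot\beta^*) = x_{ij}'(\beta^*\oslash x_m).
\]
Hence the fixed utility in the normalized model evaluated at $\beta^*$ coincides, term by term, with the fixed utility in the original model evaluated at $\beta^*\oslash x_m$.

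First I would substitute this index identity into the softmax expression~\eqref{eq:logit_prob}: since every numerator and every summand of the denominator (the outside-option term being the constant $1$ in both models) is unchanged, $\Pr(y_t=j)$ computed from $X^*$ at $\beta^*$ equals $\Pr(y_t=j)$ computed from $X$ at $\beta^*\oslash x_m$ for every purchase event $t$ and outcome $j$. Passing this equality through the log-likelihood contributions~\eqref{eq:log_likelihood_contribution} and summing over $t$ as in~\eqref{eq:log_likelihood} yields the functional identity $\ell(\beta^*\mid y,X^*)=\ell(\beta^*\oslash x_m\mid y,X)$, valid for every $\beta^*\in\mathbb{R}^m$ (and equivalently for $L$, since $\log$ is strictly increasing and does not move the argmax). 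Next I would note that the reparameterization $\beta^*\mapsto\beta^*\oslash x_m$ is a bijection of $\mathbb{R}^m$ onto itself, with inverse $\beta\mapsto\beta\odot x_m$; consequently maximizing the left-hand side over $\beta^*$ is literally the same optimization problem as maximizing the right-hand side over $\beta=\beta^*\oslash x_m$, so the two argmaxes are in exact correspondence and $\hat{\beta}=\hat{\beta}^*\oslash x_m$ follows.

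The only real subtlety — and the step I would expect to take the most care — is making the word ``argmax'' unambiguous, which is exactly where the hypothesis that the columns of $X$ are linearly independent enters: it is the standard full-rank condition under which the conditional logit log-likelihood is strictly concave in $\beta$, so that $\hat{\beta}$ is a unique point rather than a set. The same applies to $\hat{\beta}^*$, because each column of $X^*$ is a column of $X$ rescaled by a nonzero constant, so linear independence is preserved. With uniqueness on both sides, the correspondence between the two maximization problems pins down a single value on each side, and the claimed equality is immediate. Everything else in the argument is a direct substitution driven by Lemma~\ref{lem:products}, so the bulk of the write-up is this concavity-and-uniqueness bookkeeping rather than any new computation.
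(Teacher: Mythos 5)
Your proposal is correct and follows essentially the same route as the paper: the heart of both arguments is applying Lemma~\ref{lem:products} to show that the linear index is preserved under the reparameterization $\beta^* = \beta \odot x_m$, from which the correspondence of the MLEs follows. You are somewhat more explicit than the paper about the remaining bookkeeping (likelihood invariance under the bijective reparameterization, and uniqueness of the argmax via the full-rank condition), which the paper leaves implicit in its claim that the index identity is ``sufficient.''
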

\begin{proof}
    It is sufficient to show that for any $x$, $\beta^* = (\beta \odot x_m)$ is the unique solution to $\{\tilde{\beta} \ : \ x'\beta = x^{*\prime}\tilde{\beta}\}$. This can be shown as follows:
    \begin{align*}
        x'\beta &= (x \oslash x_m \odot x_m) \cdot \beta \\
            &= (x^* \odot x_m) \cdot \beta \\
            &= x^* \cdot (x_m \odot \beta) \text{ by Lemma~\ref{lem:products}} \\
            &= x^* \cdot (\beta \odot x_m) \\
            &= x^{*\prime}\beta^*
    \end{align*}

\end{proof}
Since the conditional logit MLE estimate is root-$n$ asymptotically normal, the asymptotic distribution of $\beta^*$ can also be recovered.
\begin{theorem}
    Define $X_m = diag(x_m)$ and let $\Sigma$ denote the asymptotic covariance matrix of $\beta$. The normalized conditional logit estimator has the asymptotic distribution $\sqrt{n}(\hat{\beta}^* - \beta^*) \overset{d}{\longrightarrow} N(0, X_m \Sigma X_m)$.
    \label{thm:distribution}
\end{theorem}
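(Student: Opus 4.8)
The plan is to reduce the statement to a deterministic linear transformation of the already-known asymptotic law of the unnormalized estimator $\hat{\beta}$. The starting point is the exact finite-sample identity from Theorem~\ref{thm:identification}, $\hat{\beta} = \hat{\beta}^* \oslash x_m$. Rearranging gives $\hat{\beta}^* = \hat{\beta} \odot x_m$, and since $x_m$ has no zero entries, Hadamard multiplication by $x_m$ is exactly left-multiplication by the diagonal matrix $X_m = \mathrm{diag}(x_m)$; hence $\hat{\beta}^* = X_m \hat{\beta}$. Applying the same algebra to the population objects — using $\beta^* = \beta \odot x_m$ as established in the proof of Theorem~\ref{thm:identification} — yields $\beta^* = X_m \beta$. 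Subtracting, $\hat{\beta}^* - \beta^* = X_m(\hat{\beta} - \beta)$, so that $\sqrt{n}(\hat{\beta}^* - \beta^*) = X_m \sqrt{n}(\hat{\beta} - \beta)$.

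Next I would invoke the standard asymptotic normality of the conditional logit MLE: under the stated regularity conditions — in particular linear independence of the columns of $X$, which ensures a nonsingular information matrix — $\sqrt{n}(\hat{\beta} - \beta) \overset{d}{\longrightarrow} N(0, \Sigma)$, where $\Sigma$ is the asymptotic covariance of the unnormalized estimator. Because $X_m$ is a fixed (nonrandom) matrix, the continuous mapping theorem applied to the linear map $v \mapsto X_m v$ gives $X_m \sqrt{n}(\hat{\beta} - \beta) \overset{d}{\longrightarrow} X_m Z$ with $Z \sim N(0, \Sigma)$. A linear image of a Gaussian is Gaussian with mean $X_m \cdot 0 = 0$ and covariance $X_m \Sigma X_m^{\top}$; since $X_m$ is diagonal it is symmetric, so $X_m^{\top} = X_m$ and the limit is $N(0, X_m \Sigma X_m)$, which is the claim.

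The only point needing care — and the closest thing to an obstacle — is the treatment of $x_m$ as a fixed normalizing constant rather than a data-dependent quantity; if $x_m$ were itself estimated and $\sqrt{n}$-consistent for some limit, an additional delta-method term would enter the asymptotic variance. Under the paper's framing $x_m$ is a chosen constant vector with no zero entries, so $X_m$ is constant and invertible and no such term arises, leaving the argument to be essentially bookkeeping: convert the Hadamard relation to matrix form and push the fixed linear map through the central limit theorem. A secondary cosmetic check is to confirm that the $\Sigma$ in the statement is precisely the asymptotic covariance (inverse Fisher information) of the original, unnormalized parameterization, which is how it is defined in the hypothesis.
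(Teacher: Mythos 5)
Your proof is correct and takes essentially the same route as the paper: the paper applies the delta method to $\beta^* = \beta \odot x_m$ and computes $\nabla\beta^* = X_m$, which for an exactly linear map is precisely your continuous-mapping argument, so the two differ only in packaging. Your closing caveat that $x_m$ must be treated as a fixed (nonrandom) vector for this to go through without an extra variance term is a valid point that the paper leaves implicit.
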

\begin{proof}
    By the delta method, $\sqrt{n}(\hat{\beta}^* - \beta^*) \overset{d}{\longrightarrow} N(0, \nabla \beta^{*\prime} \Sigma \nabla \beta^{*})$, with $\nabla \beta^{*} \equiv \partial \beta^* / \partial \beta$. Since $X_m$ is symmetric, it is sufficient to show that $\nabla \beta^{*} = X_m$. This follows directly from the following matrix reformulation of $\beta^*$
    \begin{equation*}
        \beta^* = \beta \odot x_m = \beta X_m
    \end{equation*}
\end{proof}
The results in Theorems~\ref{thm:identification} and~\ref{thm:distribution} can be extended to other normalization methods as well. A common extension to scaling is \emph{centered scaling}, as defined in Definition~\ref{def:cscaling}.

\begin{definition}
  Let $x$ be a length-$m$ vector to be normalized and $x_m$ and $a$ be two length-$m$ normalizing vectors. Centered scaling is a normalization method of the form $(x - a) \oslash x_m$.
  \label{def:cscaling}
\end{definition}
Centered scaling encompasses several common normalization methods, including min-max and z-score normalization, and is used to additionally decrease estimation sensitivity to outlier observations. Note that Definition~\ref{def:cscaling} only applies to covariates. That is, from a notational standpoint, the intercept must be modeled explicitly rather than allowing one of the columns in $X$ to be 1. Otherwise, Definition~\ref{def:cscaling} would imply that the normalized model doesn't use an intercept. This distinction is irrelevant for scaling. As such, restate the utility function from Equation~\eqref{eq:linear_utility} as

\begin{equation}
    \overline{u}_{ij} = \beta_0 + x_{ij}'\gamma
    \label{eq:linear_utility_intercept}
\end{equation}

\begin{corollary}
  Define $(\hat{\beta}_0^*, \hat{\gamma}*) = \text{argmax } L(\beta_0^*, \gamma|X^*)$ and $(\hat{\beta}_0, \hat{\gamma}) = \text{argmax } L(\beta_0, \gamma|X)$ as the MLE estimators for the conditional logit model with an arbitrary row of $X^*$ defined as $x^* = (x - a) \oslash x_m$, with $x$ denoting the corresponding row of the covariate matrix $X$. Assume that the columns of $X$ and a column of ones are all linearly independent. If utility is linear in its parameters, then $\hat{\gamma} = \hat{\gamma}^*\oslash x_m$ and $\hat{\beta}_0 = \hat{\beta}^*_0 - a'\hat{\gamma}$.
  \label{cor:identification}
\end{corollary}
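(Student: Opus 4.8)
The plan is to mirror the proof of Theorem~\ref{thm:identification}: the conditional logit log-likelihood depends on the data and parameters only through the fixed utilities $\overline{u}_{ij}$, so it suffices to exhibit a bijective reparametrization $(\beta_0,\gamma) \leftrightarrow (\beta_0^*,\gamma^*)$ under which every original index $\beta_0 + x'\gamma$ equals the normalized index $\beta_0^* + x^{*\prime}\gamma^*$ for the corresponding row $x^* = (x-a)\oslash x_m$. Given such a map, $L(\beta_0,\gamma\mid X) = L(\beta_0^*,\gamma^*\mid X^*)$ as functions on the (reparametrized) parameter space, so their maximizers correspond under the map; the assumed linear independence of the columns of $X$ together with a column of ones guarantees that the index representation is injective in the parameters, so each argmax is a singleton and the correspondence becomes the claimed equality.

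First I would invert the normalization, writing $x = (x^* \odot x_m) + a$, which is valid since $x_m$ has no zero entries. Substituting into the linear index and applying Lemma~\ref{lem:products} to the term $(x^* \odot x_m)'\gamma$ gives
\begin{align*}
  \beta_0 + x'\gamma &= \beta_0 + \big((x^* \odot x_m) + a\big)'\gamma \\
    &= \big(\beta_0 + a'\gamma\big) + (x^* \odot x_m)\cdot\gamma \\
    &= \big(\beta_0 + a'\gamma\big) + x^* \cdot (x_m \odot \gamma) \text{ by Lemma~\ref{lem:products}} \\
    &= \beta_0^* + x^{*\prime}\gamma^*,
\end{align*}
where $\gamma^* \equiv \gamma \odot x_m$ and $\beta_0^* \equiv \beta_0 + a'\gamma$. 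Because $x_m$ is nonvanishing, this map is a genuine bijection on the parameter space, with inverse $\gamma = \gamma^* \oslash x_m$ and $\beta_0 = \beta_0^* - a'\gamma$.

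Finally I would invoke the correspondence of maximizers: since the display above holds simultaneously for every row of $X$, the fitted utilities — and hence the entire log-likelihood — are identical under the reparametrization, so $\hat{\gamma}^* = \hat{\gamma}\odot x_m$ and $\hat{\beta}_0^* = \hat{\beta}_0 + a'\hat{\gamma}$ at the optimum; rearranging gives $\hat{\gamma} = \hat{\gamma}^*\oslash x_m$ and $\hat{\beta}_0 = \hat{\beta}_0^* - a'\hat{\gamma}$, where the latter uses the former identity to match the stated form. The one point that needs care — the main obstacle — is precisely the step from ``the likelihoods agree'' to ``the estimators satisfy the identity'': this requires the maximizer to be unique, which is exactly what the linear-independence hypothesis supplies (it rules out two distinct $(\beta_0,\gamma)$ yielding the same index vector across all observations), and it requires the reparametrization to be bijective rather than merely surjective, which follows from $x_m$ having no zero components. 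Everything else is the routine Hadamard-algebra bookkeeping already exercised in Lemma~\ref{lem:products} and Theorem~\ref{thm:identification}.
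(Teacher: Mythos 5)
Your proposal is correct and follows essentially the same route as the paper: both establish the identity $\beta_0 + x'\gamma = \beta_0^* + x^{*\prime}\gamma^*$ with $(\beta_0^*,\gamma^*)=(\beta_0+a'\gamma,\ \gamma\odot x_m)$ via Lemma~\ref{lem:products}, differing only in that you substitute the inverted normalization $x = (x^*\odot x_m)+a$ where the paper writes $x = (x-a+a)$ and cancels $\oslash x_m \odot x_m$. Your added discussion of why equality of the index (hence of the likelihood) plus uniqueness of the maximizer yields the stated relation between the estimators is a welcome elaboration of the step the paper compresses into ``it is sufficient to show,'' but it is not a different argument.
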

\begin{proof}
  Following the same procedure as the proof for Theorem~\ref{thm:identification}, it is sufficient to show that for any $x$, $(\beta_0^*, \gamma^*) = (\beta_0 + a'\gamma, \gamma \odot x_m)$ is the unique solution to $\{(\tilde{\beta_0}, \tilde{\gamma}) \ : \ \beta_0 + x'\gamma = \tilde{\beta}_0 + x^{*\prime}\tilde{\gamma}\}$
  \begin{align*}
        \beta_0 + x'\gamma &= \beta_0 + [(x - a + a) \oslash x_m \odot x_m] \cdot \gamma \\
            &= \beta_0 + [(x - a) \oslash x_m \odot x_m + a] \cdot \gamma \\
            &= \beta_0 + [(x - a) \oslash x_m \odot x_m] \cdot \gamma + a \cdot \gamma \\
            &= \beta_0 + [(x - a) \oslash x_m] \cdot (x_m \odot \gamma) + a \cdot \gamma \text{ by Lemma~\ref{lem:products}} \\
            &= \beta_0^* + x^{*\prime}\gamma^*, \text{ since $a'\gamma$ is a constant}\\
    \end{align*}

\end{proof}
Using Corollary~\ref{cor:identification}, recovering the asymptotic distribution of $(\beta_0, \gamma)$ is a straight-forward extension of Theorem~\ref{thm:distribution}.
\begin{corollary}
  Define $\tilde{x}_m$ as $(1, x_m)$, $\tilde{X}_m = diag(\tilde{x}_m)$, and let $\Sigma$ denote the asymptotic covariance matrix of $\beta$, with $\beta=(\beta_0, \gamma)$. The center scaled normalized conditional logit estimator has the asymptotic distribution $\sqrt{n}(\hat{\beta}^* - \beta^*) \overset{d}{\longrightarrow} N(0, X_m \Sigma X_m)$.
    \label{cor:distribution}
\end{corollary}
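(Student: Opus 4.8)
The plan is to reproduce, essentially verbatim, the delta-method argument used for Theorem~\ref{thm:distribution}, now applied to the parameter map characterized in Corollary~\ref{cor:identification}. First I would recall that the conditional logit MLE is root-$n$ asymptotically normal, so $\sqrt{n}(\hat{\beta} - \beta) \overset{d}{\longrightarrow} N(0, \Sigma)$ with $\beta = (\beta_0, \gamma)$. Corollary~\ref{cor:identification} exhibits $\beta^* = (\beta_0^*, \gamma^*)$ as a fixed function of $\beta$ — in fact a \emph{linear} one — namely $\beta_0^* = \beta_0 + a'\gamma$ and $\gamma^* = \gamma \odot x_m = X_m \gamma$. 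Since this map is continuously differentiable, the delta method yields $\sqrt{n}(\hat{\beta}^* - \beta^*) \overset{d}{\longrightarrow} N(0, J\,\Sigma\, J')$ with $J \equiv \partial \beta^* / \partial \beta$, so it suffices to identify $J$.

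The second step is to compute $J$ from the two component expressions. Differentiating $\beta_0^* = \beta_0 + a'\gamma$ gives $\partial \beta_0^*/\partial \beta_0 = 1$ and $\partial \beta_0^*/\partial \gamma = a'$, while differentiating $\gamma^* = X_m\gamma$ gives $\partial \gamma^*/\partial \beta_0 = 0$ and $\partial \gamma^*/\partial \gamma = X_m$. Hence $J$ is the block upper-triangular matrix $\left(\begin{smallmatrix} 1 & a' \\ 0 & X_m \end{smallmatrix}\right)$, whose bottom-right block $X_m = diag(x_m)$ reflects the same Hadamard rescaling as in Theorem~\ref{thm:identification}. When the centering vector is zero this collapses to $J = \tilde{X}_m = diag(1, x_m)$, which is symmetric, so $J\,\Sigma\,J' = \tilde{X}_m \Sigma \tilde{X}_m$ and the conclusion reads exactly as stated (with $\tilde{X}_m$ in place of $X_m$, so that the dimensions conform to $\Sigma$ being the covariance of the full vector $(\beta_0,\gamma)$); for a general centering vector one simply substitutes the block-triangular $J$ above into $J\,\Sigma\,J'$.

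The main obstacle — really the only point of departure from Theorem~\ref{thm:distribution} — is the cross term $\partial \beta_0^*/\partial \gamma = a'$: the normalized intercept depends on \emph{both} the original intercept and the original slope, so $J$ is not diagonal and the asymptotic variance of $\hat{\beta}_0^*$ inherits contributions from the entire slope block of $\Sigma$ together with the intercept–slope covariances. The care required is bookkeeping: keeping the ordering of $(\beta_0,\gamma)$ consistent across $\Sigma$, $J$, and the conclusion, and confirming that $J$ is nonsingular (which follows from the linear-independence hypothesis on the columns of $X$ and the ones column, via Corollary~\ref{cor:identification}) so that the limiting normal law is nondegenerate in the same directions as the original. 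Everything else is the routine matrix manipulation already carried out in the proof of Theorem~\ref{thm:distribution}.
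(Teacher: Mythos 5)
Your proposal follows the same route as the paper's own proof---apply the delta method to the map $\beta\mapsto\beta^*$ read off from Corollary~\ref{cor:identification} and compute its Jacobian---but your computation of that Jacobian is the correct one, and it exposes a genuine error in the paper's argument and in the stated conclusion. The paper writes $\beta^*=\beta\odot x_m+k$ with $k=(a'\gamma,0,\dots,0)$, declares $k$ ``a constant,'' and concludes $\nabla\beta^*=X_m$. But $a'\gamma$ depends on $\gamma$, which is a component of the very vector $\beta=(\beta_0,\gamma)$ with respect to which one is differentiating, so $k$ is not constant and the gradient is not (block-)diagonal. Your $J=\left(\begin{smallmatrix}1 & a' \\ 0 & X_m\end{smallmatrix}\right)$ is the correct derivative, and the correct limit is $N(0,\,J\Sigma J')$. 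This coincides with $\tilde{X}_m\Sigma\tilde{X}_m$ only when $a=0$, i.e.\ when centered scaling degenerates to plain scaling; for genuine centered scaling (min--max, z-score) the asymptotic variance of $\hat{\beta}_0^*$ picks up the additional terms $a'\Sigma_{\gamma\gamma}a$ plus intercept--slope covariances, so the corollary as stated is false in general (and the displayed covariance $X_m\Sigma X_m$ does not even conform dimensionally, since $\Sigma$ is the covariance of the full vector $(\beta_0,\gamma)$ while $X_m=\mathrm{diag}(x_m)$ matches only $\gamma$; the intended matrix is presumably $\tilde{X}_m$).

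Two minor points on your write-up. First, invertibility of $J$ follows immediately from its triangular structure and the fact that the entries of $x_m$ are nonzero; the linear-independence hypothesis on the columns of $X$ is not what delivers it. Second, it is worth stating explicitly that the lower-right block of $J\Sigma J'$ is still $X_m\Sigma_{\gamma\gamma}X_m$, so the paper's claim remains valid for the slope coefficients $\gamma^*$; it is only the intercept row and column of the limiting covariance that the paper gets wrong. Otherwise your argument is complete and is the correct version of the paper's.
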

\begin{proof}
  The proof is the same as for Theorem~\ref{thm:distribution} after noting that, rewriting the result of Corollary~\ref{cor:identification}, $\beta^* = \beta \odot x_m + k$, with $k=(a'\gamma, 0, \dots, 0)$, which is a constant. Therefore, $\nabla \beta^* = X_m$. 
\end{proof}

\section{Simulation performance}
\label{sec:simulation}
This section tests the numerical performance of the paper's analytical results by simulating a set of purchase decisions by customers whose preferences are represented by Equations~\eqref{eq:utility_individual_mixed} and~\eqref{eq:linear_utility}. Consider an example of a scaling using the $varmax$ operator as defined in Definition~\ref{def:varmax}\footnote{This example focuses specifically on scaling normalization, but the same exercise could be replicated for centered scaling.}.

\begin{definition}
    Define the variable maximum operator, or $varmax$, over an $N \times K$ matrix as $varmax(X) = (\max(|x_1|), \dots, \max(|x_K|))$, with $x_k$ denoting the $k$th column of $X$ and $\max(|x_k|)$ the maximum absolute value of $x_k$. Denote $varmax(x)$ as $x_{max}$.
    \label{def:varmax}
\end{definition}
Using $x_{max}$ as the scaling normalizing vector, the covariate matrix will be scaled such that all values fall in the interval [-1, 1].

The simulated model follows the simple linear specification
\begin{equation}
  u_{ij} = x_j'\beta + \varepsilon_{ij}
\end{equation}
with $x_j$ containing four variables (three randomly distributed covariates and an intercept). Simulation details are given in Table~\ref{tab:sim_details}. Note that the covariates are simulated to have drastically different scales and also to have values large enough to generate numerical infinite values in Equation~\eqref{eq:logit_prob}.

\begin{table}[H]
  \begin{center}
    \caption{Simulation details}
    \label{tab:sim_details}
    \begin{tabularx}{\textwidth}{L{1}L{1}}
      \textbf{Item} & \textbf{Value} \\
      \hline \hline
      Number of simulations & 10000 \\
      Number of customers & 2000 \\
      Number of tasks & 25 \\
      Number of options per task & 5 \\
      \hline
      $x_1$ & Intercept (all ones) \\
      $x_2$ & $\sim Binomial(p=0.5)$ \\
      $x_3$ & $\sim Lognormal(\mu=0, \sigma=1)$ \\
      $x_4$ & $\sim Normal(\mu=0, \sigma=5000)$ \\
      $(\beta_1, \beta_2, \beta_3, \beta_4)$ & $(-3, 4, -1.7, 0.00006)$ \\
      \hline
    \end{tabularx}
  \end{center}
\end{table}

Results of the simulation exercise are in Figure~\ref{fig:varmax_sim_perf}, which shows the estimated density functions from the empirical distribution of $\hat{\beta}$ associated with each variable, denormalized according to Theorem~\ref{thm:identification}. Note that without normalization each simulation resulted in numerical infinite values and so failed to converge past an initial guess.
\begin{figure}[H]
  \includegraphics[width=\textwidth]{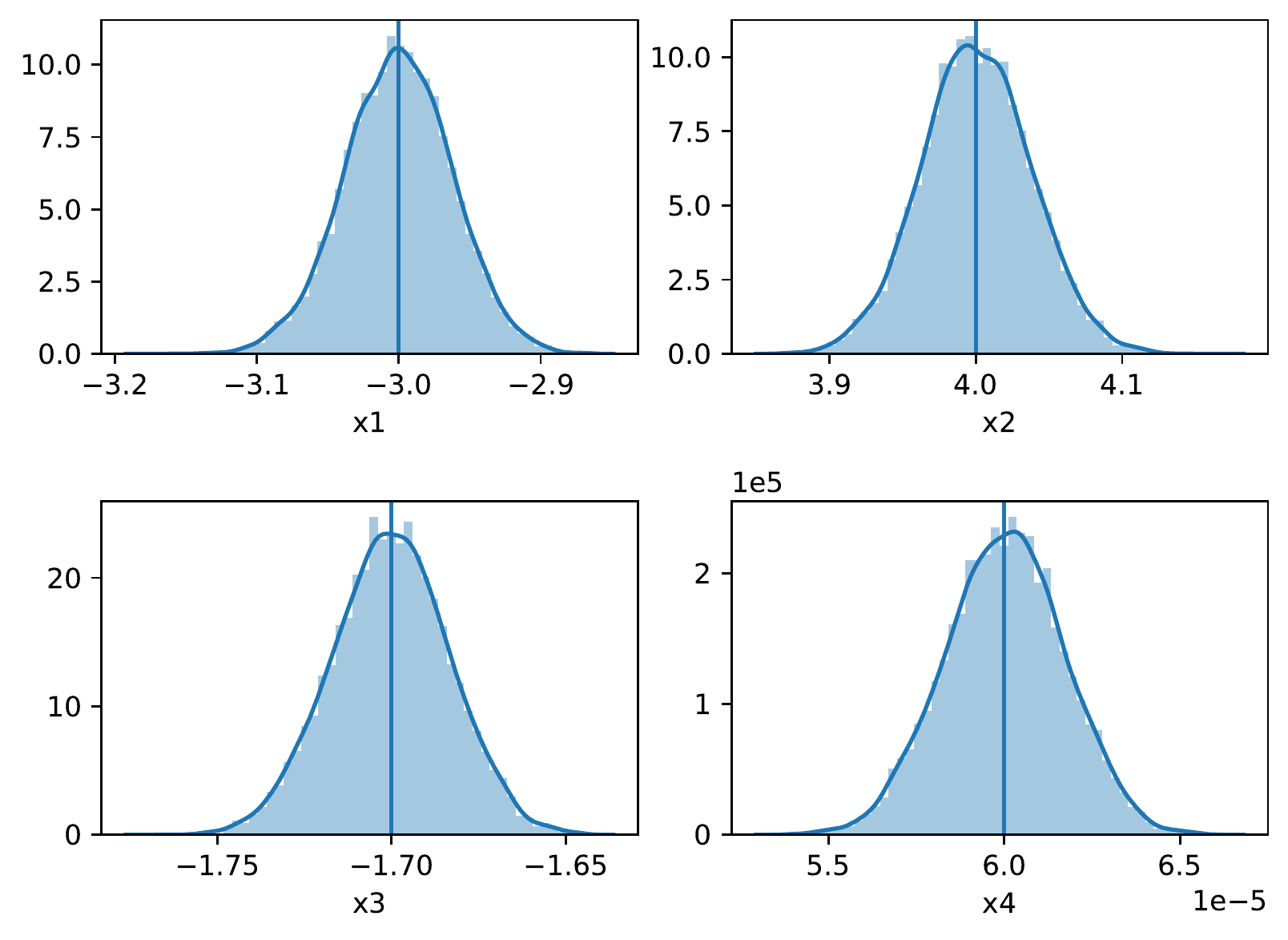}
  \caption{Varmax normalization simulation performance}
  \label{fig:varmax_sim_perf}
\end{figure}
For each variable in $u_{ij}$, the simulated estimates of its corresponding utility weight $\beta$ are centered at the data-generating value with a well-behaved symmetrical distribution, which is exactly what Theorem~\ref{thm:identification} suggests. Further, when lowering the scale of $x_4$ so that non-normalized models can be estimated as well ($\sigma = 5$), the distributions of $\hat{\beta}$ between the normalized and non-normalized models are equivalent according to a 2-sample Kolmogorov-Smirnov test ($p\approx 1$ for all $\beta$).
\section{Conclusion}
\label{sec:conclusion}
Normalization is a powerful tool for improving the numerical efficiency and accuracy of statistical models used for prediction. The results presented here suggest, however, that under certain circumstances, researchers can use normalization for inferential problems as well.

\printbibliography
\end{document}